\documentclass[aps,prl,10pt,a4paper,reprint,
superscriptaddress,
nofootinbib,twocolumn,longbibliography]{revtex4-1}
\usepackage{amsmath}
\usepackage{amsfonts}
\usepackage{amssymb}
\usepackage{graphicx}
\newcommand{\cH}{\mathcal{H}}
\newcommand{\ot}{\otimes}
\newcommand{\id}{\openone}
\newcommand{\ket}[1]{| #1 \rangle}

\newcommand{\ketbra}[2]{|#1\rangle\!\langle#2|}

\newcommand{\tr}{\operatorname{tr}}
\newcommand{\dbot}{\mathbin{\text{$\bot\mkern-8mu\bot$}}}
\newcommand{\indep}{\mathbin{\text{$\bot\mkern-8mu\bot$}}}

\newcommand{\st}{\mathrm{st}}
\newcommand{\Icomp}{I-compatible}
\newcommand{\bef}{\prec}

\newcommand{\nbef}{\nprec}
\newcommand{\naft}{\nsucc}
\usepackage[dvipsnames]{xcolor}
\usepackage{tikz,amsthm,nicefrac}
\usepackage[colorlinks=true, urlcolor=red, linkcolor=red, citecolor=blue, pdfborder={0 0 0}]{hyperref}
\definecolor{Md}{RGB}{232,238,248}   
\AtBeginDocument{%
    \newwrite\bibnotes
    \def\bibnotesext{Notes.bib}
    \immediate\openout\bibnotes=\jobname\bibnotesext
    \immediate\write\bibnotes{@CONTROL{REVTEX41Control}}
    \immediate\write\bibnotes{@CONTROL{%
    apsrev41Control,author="08",editor="1",pages="1",title="0",year="0"}}
     \if@filesw
     \immediate\write\@auxout{\string\citation{apsrev41Control}}%
    \fi
  }%
\usetikzlibrary{shapes.geometric,arrows.meta,positioning}
\tikzset{
  obs/.style={regular polygon, regular polygon sides=3, draw, thick,
              inner sep=0pt, minimum size=7mm, fill=GreenYellow},
  lat/.style={circle, draw, thick, inner sep=0pt, minimum size=6mm, fill=Melon},
  ed/.style={-{Latex[length=1.6mm]}, thick},
  edn/.style={-{Latex[length=1.6mm]}, thick, densely dashed}
}
\newtheorem{theorem}{Theorem}

\newtheorem{lemma}{Lemma}
\theoremstyle{definition}
\newtheorem{definition}{Definition}

\usepackage[nodayofweek]{datetime}

\begin{document}
\title{Spurious quantum correlations}
\author{Shashaank Khanna}
\email{shashaank.khanna@lis-lab.fr}
\affiliation{Department of Mathematics, University of York, Heslington, York, YO10 5DD, United Kingdom}
\affiliation{Aix-Marseille University, CNRS, LIS, Marseille, France}
\author{Matthew F. Pusey}
\email{matthew.pusey@york.ac.uk}
\affiliation{Department of Mathematics, University of York, Heslington, York, YO10 5DD, United Kingdom}
\author{Roger Colbeck}
\email{roger.colbeck@kcl.ac.uk}
\affiliation{Department of Mathematics, King's College London, Strand, London, WC2R 2LS, United Kingdom}

\date{$3^{\mathrm{rd}}$ September 2026}

\begin{abstract}
    In his seminal paper, Bell [Physics Physique Fizika {\bf 1}, 195 (1964)] considers the correlations that result from space-like separated measurements on a pair of entangled particles. He uses relativity theory to motivate the Bell causal structure, then shows the existence of quantum correlations that cannot be explained classically within this causal structure. Classical explanations of such quantum correlations are possible in alternative causal structures, for instance, those that allow superluminal causal influences, but, as shown in [New Journal of Physics {\bf 17} 033002 (2015)], all such alternative explanations require fine tuning (causation without correlation). Here we discuss the existence of spurious quantum correlations --- correlations that look quantum in one causal structure, but have a natural classical explanation in another. More precisely, there are causal structures that admit non-classical quantum correlations, but for which the same correlations have a classical explanation in another causal structure without fine tuning. The realisation in the other causal structure can also be achieved without breaking any natural constraints on the causal structure that follow from relativity theory. However, similarly to non-classical quantum correlations in the Bell causal structure, we find other causal structures with non-classical quantum correlations that do not have a classical causal explanation in any alternative causal structure without fine tuning.
\end{abstract}
\maketitle

\section{Introduction}
Bell's theorem~\cite{bell, bellnouvelle} establishes the existence of quantum correlations that cannot be explained by any local classical theory. To do so he considered a particular causal structure, and showed that measurements on a pair of entangled particles generate correlations violating a bound that holds whenever the shared particles are classical. Later, Wood and Spekkens~\cite{wood} considered whether other causal structures could realise these quantum correlations classically, and showed that doing so requires fine tuning, i.e., the presence of causation without correlation (see later for a precise definition). Such realisations would hence be somewhat conspiratorial, for instance, they might have superluminal causation while the observed correlations are no-signalling.

Taking a step back, we consider the task of finding possible causal explanations of a given phenomenon. More precisely, we consider a setup involving several variables and trying to infer the underlying causal relations. By performing repeated trials, and assuming an i.i.d.\ structure, we can get a close approximation of the distribution of the variables. For the purposes of the present paper, we consider the case in which we know the distribution exactly. A first feature we might look for is the set of conditional independences in the distribution, i.e., whether there are variables or sets of variables $X$, $Y$, $Z$ such that $P(X|YZ)=P(X|Z)$. Causal structures often directly imply sets of conditional independences, so any causal structure that implies a conditional independence not seen in the observed distribution can be eliminated. Furthermore, it is common to also eliminate causal structures in which almost all correlations that can be generated with that causal structure do not have a conditional independence that is present in the observations. This additional elimination corresponds to seeking causal structures for which the observed correlations are faithful (not fine tuned).

The IC* algorithm~\cite{pearl} provides a way to go from a distribution (or its conditional independences) to a set of possible faithful classical causal structures with the additional restriction that any latent common causes in the causal structure are pairwise (have at most two children). The justification for this restriction is the result that for a given set of observed conditional independences, if there exists a causal structure that generates them faithfully, then there also exists a causal structure with pairwise common causes that generates them faithfully~\cite{verma1993graphical}. This restriction does not cause problems for most of our results, but we need to go beyond it when discussing correlations that are not spurious. IC* also restricts latent variables to be parentless. This is because in a classical causal structure, for any latent variable that is not parentless, if a new causal structure is formed by removing the edges from its parents and adding a directed edge from each of its parents to each of its children, any distribution realisable classically in the original causal structure is classically realisable in the latter. Since our uses of IC* are to find classical causal structures, this second restriction will not cause issues in this work.

If only one causal structure results, then one should try to recreate the correlations within that structure, and if that is possible, this would appear to be an acceptable causal structure for the observed correlations\footnote{If not, then one is forced to reconsider the assumptions that go into IC*, e.g., it may be a sign that the correlations are fine tuned.}. In general, there are several causal structures that can recreate the conditional independences without fine tuning. In some cases, additional principles may be used to discard some of these. For instance, if each observed variable can be assigned a spacetime location, some causal structures may be rejected because they would have causation outside the future lightcone, in violation of relativity theory. E.g., if $B$ is observed outside the future lightcone of $A$, then it is reasonable to reject causal structures in which $A$ is a cause of $B$.\footnote{There is a bit of subtlety here in that we would need to be able to assign spacetime points or regions in which variables are first generated; in some cases this may not be easy to justify.}

As Bell's theorem shows, the sets of correlations that can be produced by a causal structure also depend on the theory. In this work we are interested in causal structures in which there are non-classical quantum correlations, but for which the same correlations can be explained classically in another causal structure. The alternative classical explanation should not be fine tuned, and it should also not be possible to eliminate it using the relativistic principle mentioned above. We refer to correlations with these properties as spurious quantum correlations because at first sight they appear quantum, but this may be because the wrong causal structure is being considered. From the result of Wood and Spekkens~\cite{wood} mentioned earlier, there are no spurious quantum correlations in the causal structure Bell considered.

To find spurious causal structures, we investigated the causal structures from~\cite{HLP_2014} that support non-classical correlations. We use the IC* algorithm to try to identify other causal structures that imply the same observed conditional independences. We find that for some causal structures, including our primary example $H_1$ in Figure~\ref{fig:spurious1}, or $H_2$ and $H_3$ in Figure~\ref{fig:spurious2}, all non-classical (quantum) correlations they can produce are spurious. However, there are other causal structures, like in Figure~\ref{fig:genuine}, with non-classical (quantum) correlations, for which no other causal structure faithfully allows the same conditional independences. Thus no non-classical (quantum) correlations in such causal structures are spurious.

\section{Classical and quantum correlations within a causal structure}
We first give a few definitions to help set up our analysis.

\begin{definition}[Causal Structure]
A \emph{causal structure} is a directed acyclic graph (DAG) over a set of latent (denoted here within a circle) and observed (denoted here within a triangle) nodes.
\end{definition}
 Directed edges in a causal structure indicate (the possibility of) direct causal influences. The observed nodes represent observable things and hence have an associated random variable\footnote{We use upper case to denote random variables and lower case to denote specific instances of those variables.}, while latent nodes are unobserved and the way they are modelled depends on the underlying theory (in this work we are mainly concerned with classical and quantum theory). Figure~\ref{fig:spurious1}(a) shows the main causal structure we consider in this paper.

\begin{figure}[t]
\centering
\begin{tikzpicture}
\node[lat] (A) at (0,2.2)   {$A$};
\node[lat] (B) at (1.5,2.2) {$B$};
\node[obs] (D) at (3.0,2.2) {$D$};
\node[obs] (C) at (0.75,1.1){$C$};
\node[obs] (E) at (2.25,1.1){$E$};
\node[obs] (F) at (1.5,0)   {$F$};
\draw[ed] (A) -- (C);
\draw[ed] (B) -- (C);
\draw[ed] (B) -- (E);
\draw[ed] (D) -- (E);
\draw[ed] (C) -- (F);
\draw[ed] (E) -- (F);
\draw[ed] (A) to[bend right=40] (F);
\draw[ed, draw=red] (D) to[bend left=35] (F);
\node at (1.5,-0.85) {(a)};
\node at (5.5,-0.85) {(b)};
\begin{scope}[xshift=4cm]
\node[obs] (D) at (3.0,2.2) {$D$};
\node[lat] (C) at (1,1.2){$\Lambda$};
\node[obs] (E) at (2.25,1.1){$E$};
\node[obs] (F) at (1.5,0)   {$F$};
\draw[ed] (D) -- (E);
\draw[ed] (C) -- (F);
\draw[ed] (E) -- (F);
\draw[ed] (C) -- (E);
\end{scope}
\end{tikzpicture}
\caption{(a) The causal structures $H_1$ (without red arrow) and $G_1$ (with red arrow). 
(b) The instrumental causal structure which is used in the proof of our results.}
\label{fig:spurious1}
\end{figure}

Let $H$ be a causal structure with nodes $X_1,\ldots,X_n$ of which $X_{m+1},\ldots,X_n$ are latent and let $X_i^{\downarrow}$ denote the parents of $X_i$ in $H$ (i.e., the nodes that have a directed edge to $X_i$ in $H$). In classical theory, every latent node has an associated random variable, and a distribution $P(X_1\ldots X_m)$ over the observed variables is said to be \emph{classically compatible} with $H$ if there exists a distribution $Q(X_1\ldots X_n)$ over all the nodes such that
\begin{eqnarray}
  P(x_1\ldots x_m)&=&\sum_{x_{m+1}\ldots x_n}Q(x_1\ldots x_n),\text{ and}\\
  Q(x_1\ldots x_n)&=&\prod_{i=1}^nQ(x_i|x_i^{\downarrow})\,.\label{eq:prod}
\end{eqnarray}
This condition implies that any variable is conditionally independent of its non-descendants given its parents (Theorem~1.2.7 in~\cite{pearl}), which is known as the causal Markov condition.

For example, $P(CDEF)$ is classically compatible with the causal structure $H_1$ in Figure~\ref{fig:spurious1}(a) if there exists a joint distribution $Q(ABCDEF)$ such that
  \begin{multline}   
    P(cdef)=\sum_{ab}Q(abcdef) \text{\;\;and}\\
    Q(abcdef)=Q(a)Q(b)Q(d)Q(c|ab)Q(e|bd)Q(f|ace)\,.
     \label{eqn1}
	\end{multline}

In quantum theory, each of the latent nodes is associated with a quantum state instead of a random variable. In this work we only consider causal structures in which every latent node is parentless. Given a causal structure $H$, the set of quantum mechanically compatible distributions is defined as follows. Consider a latent node $X$ with children $X^\uparrow$. For each element $Y$ of $X^\uparrow$ associate a Hilbert space $\cH_X^Y$, and then associate a joint quantum state on the tensor product of these Hilbert spaces, $\bigotimes_{Y\in X^\uparrow}\cH_X^Y$, with $X$. For each observed node $Y$ with latent parents $Y^L$ and observed parents $Y^O$, associate a POVM that acts on $\bigotimes_{Q_i\in Y^L}\cH_{Q_i}^Y$, where the POVM can depend on the values of the variables $Y^O$. The joint distribution is formed by applying the POVMs to the quantum state. If $Y^L$ is empty, taking the empty tensor product to be $\mathbb{C}$, since a POVM on $\mathbb{C}$ is just a probability distribution, an observed node with no latent parents is assigned a probability distribution (conditioned on the values of any observed parents), recovering the classical case. Correlations obtained in such a way are called \emph{quantum mechanically compatible} with the causal structure.

For example, a distribution $P(CDEF)$ is quantum mechanically compatible with $H_1$ if there exist Hilbert spaces $\cH_A^F$, $\cH_A^C$, $\cH_B^C$, $\cH_B^E$, density operators $\rho_A$ on $\cH_A^F\ot\cH_A^C$ and $\rho_B$ on $\cH_B^C\ot\cH_B^E$, and POVMs $\{K^e_d\}_e$ on $\cH_B^E$ (for each $d$), $\{M^{f}_{c,e}\}_f$ on $\cH_A^F$ (for each $c,e$) and $\{N^c\}_c$ on $\cH_A^C\ot\cH_B^C$ such that
\begin{multline}
   P(cdef)=\tr((M^{f}_{c,e}\ot N^c\ot K^e_d)(\rho_A\ot\rho_B))P(d).
     \label{eqn2}
\end{multline}

Given a causal structure, certain conditional independences that hold for all (classically or quantum mechanically) compatible distributions can be directly read off its associated DAG using the concept of \(d\)-separation.

\begin{definition}[Blocked paths]
Consider a causal structure with $X$, $Y$ and $Z$ disjoint sets of nodes. A path from $X$ to $Y$ is \emph{blocked} by $Z$ if it contains either $A\to W\to B$ with $W\in Z$, $A\leftarrow W\to B$ with $W\in Z$ or $A\to W \leftarrow B$ such that neither $W$ nor any descendant of $W$ belongs to $Z$.
\end{definition}

\begin{definition}[\(d\)-separation]
\label{def: d-sep}
Consider a causal structure with disjoint sets of nodes, $X$, $Y$ and $Z$.  $X$ and $Y$ are \emph{d-separated} by $Z$, denoted $X\bot Y| Z$ if either there is no path between any node in $X$ and any node in $Y$, or if every path from a node in $X$ to a node in $Y$ is \emph{blocked} by $Z$. 
\end{definition}
In~\cite{Geiger1987,Verma1988} it was shown that if $X\bot Y| Z$ in a causal structure $H$, then any distribution $P$ that is classically compatible with $H$ satisfies $P(X|YZ)=P(X|Z)$, which we denote as $X\dbot Y| Z$. This was extended to quantum causal structures in~\cite{HLP_2014}. However, the converse does not hold and cases where it does not are said to be fine tuned (see Definition~\ref{fine tuning}).

It is further useful to consider distributions that are potentially beyond the quantum set. Given a causal structure $H$, a distribution $P$ over the observed nodes of $H$ is said to be \Icomp{} with $H$ if $P$ obeys all the conditional independences that follow from the \(d\)-separation relations on the observed nodes, i.e., if $X\dbot Y| Z$ for all sets of observed nodes $X$, $Y$, $Z$ for which $X\bot Y| Z$. We denote by $\mathcal{I}_H$ the set of all such probability distributions. The set of distributions classically compatible with $H$ is denoted as $\mathcal{C}_H$, meanwhile the set of quantum mechanically compatible correlations is $\mathcal{Q}_H$. From~\cite{HLP_2014}, we know that $\mathcal{C}_H \subseteq \mathcal{Q}_H \subseteq \mathcal{I}_H$. If $\mathcal{C}_H \subset \mathcal{I}_H$, then $H$ is said to support non-classical correlations. Further, if $\mathcal{C}_H \subset \mathcal{Q}_H$, then $H$ supports non-classical quantum correlations.

We now define what it means for correlations to be spurious. To do so, we formally introduce fine tuning and the set of non-classical non-fine tuned correlations in a causal structure.
\begin{definition}[Fine tuning]\label{fine tuning}
Consider a causal structure $H$ and a distribution $P$ over the observed nodes of $H$ that is \Icomp{} with $H$. If there exist disjoint sets of observed nodes $X$, $Y$ and $Z$ for which $P(X|YZ)=P(X|Z)$ but where the \(d\)-separation $X\bot Y| Z$ does not hold in $H$, then $P$ is said to be \emph{fine tuned} (with respect to $H$). Otherwise $P$ is said to be \emph{faithful} (with respect to $H$).
\end{definition}

\begin{definition}
    Let $H$ be a causal structure.  A distribution $P$ over the observed variables of $H$ is \emph{non-classical and non-fine tuned (NCNFT) in $H$} if
  \begin{itemize}
  \item $P$ is \Icomp{} with $H$;
  \item $P$ is not classically compatible with $H$; and
  \item $P$ is not fine tuned with respect to $H$.
  \end{itemize}
\end{definition}
\begin{definition}[Spurious correlations]
  Let $H$ be a causal structure and $P$ be NCNFT in $H$.
  $P$ is \emph{spurious with respect to $H$} if there exists a causal structure $G$ such that $P$ is classically compatible with $G$ and not fine tuned with respect to $G$. If, in addition, $P$ is quantum mechanically compatible with $H$, then $P$ is a spurious quantum correlation with respect to $H$.
\label{spur}
\end{definition}

Definition~\ref{spur} is useful in cases where we do not have any information about the spacetime locations at which particular variables are realised. If such information is available, it is useful to use the upgraded definition given below. To introduce this, we use $\bef_H$ to denote the partial order of the causal structure $H$, i.e., for two nodes $X$ and $Y$ in $H$, the relation $X\bef_H Y$ (``$X$ is before $Y$ in $H$'') means that there is a directed path from $X$ to $Y$ in $H$. We also use $X\nbef\naft_HY$ (``$X$ and $Y$ are not ordered in $H$'') to mean that there is no directed path from $X$ to $Y$ or from $Y$ to $X$ in $H$.

\begin{definition}
Let $H$ be a causal structure. A \emph{spacetime order} for $H$ is a partial order $\bef_{\st}$ on the observed nodes of $H$ that respects the partial order of $H$ in the sense that for every pair of observed nodes $X$ and $Y$ in $H$, \[X\bef_HY\implies X\bef_{\st}Y.\]
\end{definition}
Note that a spacetime order for $H$ need not be identical to the partial order of $H$; we can have $X\nbef\naft_HY$ but $X\bef_\st Y$ for instance.

\begin{definition}[Spurious correlations with a spacetime order]
  Let $H$ be a causal structure and let $\bef_{\st}$ be a spacetime order for $H$. A distribution $P$ over the observed variables of $H$ is said to be \emph{spurious with respect to $H$ with spacetime order $\bef_{\st}$} if $P$ satisfies all the conditions of Definition~\ref{spur} but with the additional requirement on $G$ that $\bef_\st$ is a spacetime order for $G$.
\label{spur+order}
\end{definition}
\begin{definition}[Spurious correlations respecting the causal partial order]
  Let $H$ be a causal structure. A distribution $P$ over the observed variables of $H$ is said to be \emph{spurious and respect the causal partial order of $H$} if $P$ satisfies all the conditions of Definition~\ref{spur} with the additional requirement on $G$ that $\bef_G$ is identical to $\bef_H$.
\label{spur+cpa}
\end{definition}
Definition~\ref{spur+cpa} has the advantage that it does not need an additional spacetime order to be defined. However, it is a stronger requirement than Definition~\ref{spur+order}, and may not be relevant in all cases. Its significance is that if $P$ is spurious and respects the causal partial order of $H$, then, for any spacetime order $\bef_\st$ for $H$, $P$ is spurious with respect to $H$ with spacetime order $\bef_\st$. This means that, subject to natural relativistic constraints, for all possible locations of the variables in spacetime there is an alternative classical causal structure that gives rise to $P$.

For all the instances discussed in this paper we prove spuriousness respecting the causal partial order.

\section{Results}
Our main aim is to establish the existence of quantum correlations that are spurious in $H_1$. We do so by showing that all quantum correlations in $H_1$ are classical in $G_1$.

We start by giving a condition that can identify spurious correlations.
\begin{theorem}
Let $H$ and $G$ be causal structures with the same set of observed nodes with $\bef_G$ identical to $\bef_H$, and satisfying $\mathcal{I}_H=\mathcal{I}_G$, $\mathcal{C}_H\subset \mathcal{I}_H$, $\mathcal{C}_G=\mathcal{I}_G$. Then every distribution that is NCNFT in $H$ is spurious and respects the causal partial order of $H$.
\label{th1}
\end{theorem}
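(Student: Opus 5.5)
Let $P$ be NCNFT in $H$. To show that $P$ is spurious and respects the causal partial order of $H$, I will take $G$ itself as the witness causal structure required by Definitions~\ref{spur} and~\ref{spur+cpa}. The hypothesis that $\bef_G$ is identical to $\bef_H$ then gives the causal-partial-order requirement immediately. What remains is to check that $P$ is classically compatible with $G$ and that $P$ is not fine tuned with respect to $G$.

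For classical compatibility I will argue as follows. Since $P$ is NCNFT in $H$, it is \Icomp{} with $H$, so $P\in\mathcal{I}_H$. The assumption $\mathcal{I}_H=\mathcal{I}_G$ gives $P\in\mathcal{I}_G$, and $\mathcal{C}_G=\mathcal{I}_G$ then gives $P\in\mathcal{C}_G$. So $P$ is classically compatible with $G$.

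Faithfulness with respect to $G$ is the only step that needs care, because fine tuning is defined through $d$-separation and not directly through the sets $\mathcal{I}$. The key step is to show that the two graphs have the same $d$-separation relations among observed nodes. Suppose, for contradiction, that some $X\bot Y|Z$ holds in $H$ but fails in $G$. By the standard completeness result for $d$-separation (for any DAG there is a classically compatible distribution in which every independence not implied by $d$-separation fails; the result of Geiger and Verma--Pearl), there is a distribution $Q\in\mathcal{C}_G\subseteq\mathcal{I}_G$ with $X\not\dbot Y|Z$. This $Q$ cannot lie in $\mathcal{I}_H$, which contradicts $\mathcal{I}_H=\mathcal{I}_G$. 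The symmetric argument for the other direction works just as well.

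If one prefers to avoid that completeness result, a more elementary route should also work. Because each $\mathcal{I}$ is cut out exactly by its list of $d$-separation constraints, one can exhibit a distribution in $\mathcal{I}_G$ violating any constraint not in $G$'s list. For example, take a distribution that is faithful to $G$, or build one directly by making $X$ and $Y$ perfectly correlated along an unblocked path. I expect this justification to be the main, if minor, obstacle. Once the $d$-separation relations coincide, suppose $P(X|YZ)=P(X|Z)$ for some disjoint observed sets $X,Y,Z$. Because $P$ is not fine tuned with respect to $H$, we get $X\bot Y|Z$ in $H$, hence in $G$. So $P$ is not fine tuned with respect to $G$, and this completes the proof.

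The assumption $\mathcal{C}_H\subset\mathcal{I}_H$ is not needed in this argument. It only guarantees that the NCNFT set can be nonempty, so the statement is not vacuous.
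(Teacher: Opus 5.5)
Your proof is correct and follows the paper's argument. You take $G$ itself as the witness, obtain $P\in\mathcal{I}_H=\mathcal{I}_G=\mathcal{C}_G$ for classical compatibility, and transfer non-fine-tuning from $H$ to $G$ via $\mathcal{I}_H=\mathcal{I}_G$; the paper's one-line proof simply asserts that last transfer, so your appeal to completeness of $d$-separation (applied to the full DAG including latents, then marginalised) just makes explicit the coincidence of the observed $d$-separation relations that the paper uses without comment.
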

\begin{proof}
From the conditions in the statement, $\mathcal{C}_G=\mathcal{I}_H$, from which it follows that any distribution that is \Icomp{} with $H$ is classically compatible with $G$. Further, since $\mathcal{I}_G=\mathcal{I}_H$, if the distribution is not fine tuned with respect to $H$ it is also not fine tuned with respect to $G$.
\end{proof}
The next step is to establish that $G_1$ has no non-classical correlations (i.e., $\mathcal{C}_{G_1}=\mathcal{I}_{G_1}$). To do so we use the following theorem, which is (part of) Theorem~26 of~\cite{HLP_2014} (Theorem~26 in~\cite{HLP_2014} has an additional transformation that we do not need here). 
\begin{theorem}
Consider a causal structure $G$ and let $G'$ be a second causal structure obtained by performing one of the following transformations on $G$: 1) Removing an edge, 2) Removing an isolated latent node, or 3) Adding an edge from a node $X$ to node $Y$ the parents of $X$ are a subset of the parents of $Y$, and the parents of $X$ contain at least one latent node. Then $\mathcal{C}_{G'}\subseteq\mathcal{C}_{G}$.
\label{th2}
\end{theorem}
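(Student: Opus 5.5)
We need to prove Theorem 2: $\mathcal{C}_{G'}\subseteq\mathcal{C}_G$ for each of the three transformations. The plan is to take any classical model $Q'$ for $G'$ that reproduces $P$ and build a classical model $Q$ for $G$ with the same observed marginal. Throughout, we use the factorisation~\eqref{eq:prod}.

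\emph{Case 1} ($G'$ is $G$ with an edge $U\to V$ removed). Define $Q(x_V|x_V^{\downarrow})$ in $G$ to equal $Q'(x_V|x_V^{\downarrow}\setminus\{x_U\})$. That is, $V$ simply ignores its extra parent $U$. All other conditionals are copied unchanged. The product~\eqref{eq:prod} is then literally the same function, so the marginal on the observed nodes is unchanged. Hence $P\in\mathcal{C}_G$.

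\emph{Case 2} ($G'$ is $G$ with an isolated latent node $L$ removed). Assign $L$ any distribution, say a point mass, in $G$. The joint distribution becomes $Q'\cdot Q(l)$. Summing over $l$ gives back $Q'$, so the observed marginal is the same.

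\emph{Case 3} ($G'$ is $G$ with an edge $X\to Y$ added, where $X^{\downarrow}\subseteq Y^{\downarrow}$ and $X^{\downarrow}$ contains a latent node). This is the only case with content, since $Y$ in $G$ no longer has direct access to $X$. The idea is to let $Y$ simulate $X$. Since $X^{\downarrow}\subseteq Y^{\downarrow}$, $Y$ in $G$ sees every parent of $X$. We need $Y$ to reproduce exactly the value of $X$ that $X$ itself outputs, not merely a fresh sample with the right distribution, and the latent parent supplies the shared randomness for this.

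Pick a latent parent $L\in X^{\downarrow}$. In $G$, replace $L$ by $L\times R$, where $R$ is independent with the appropriate distribution; say $R$ is uniform on $[0,1]$, or a discrete analogue such as a finite product of variables. This keeps $L$ parentless and keeps its children the same. Using the standard functional representation, make $X$ a deterministic function $x=g(x^{\downarrow},r)$ with $g(x^{\downarrow},R)\sim Q'(x|x^{\downarrow})$. This is possible because every conditional distribution can be written as a function of its inputs and independent uniform noise. In $G$, $Y$ receives all of $X^{\downarrow}$, including $(l,r)$, so it can compute $x=g(x^{\downarrow},r)$ and then sample from $Q'(y|y^{\downarrow}\setminus\{x\},x)$.

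The remaining step is to check that the joint distribution over all original nodes, after marginalising $R$, equals $Q'$. The argument goes as follows. Conditional on all the original variables, $R$ is constrained only through the event $X=g(\cdot,R)$. Every other node that depends on $r$ does so only through $x$. Integrating over $R$ therefore returns the factor $Q'(x|x^{\downarrow})$ together with the unchanged factor for $Y$. The observed marginals then agree.

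The main obstacle is keeping this bookkeeping rigorous. In particular, we must confirm that enlarging $L$ changes no other conditional: every other child of $L$ should ignore $R$, so that the product structure~\eqref{eq:prod} for $G$ is genuinely satisfied. If the paper restricts variables to be finite, we would use finitely valued $R$ with rational approximation. Alternatively, we would simply cite the result as stated in~\cite{HLP_2014}.
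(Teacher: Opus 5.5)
Your proof is correct. It differs from the paper mainly in that the paper gives no argument for this statement: it imports it as part of Theorem~26 of~\cite{HLP_2014}. You instead prove it directly, by turning a classical model on $G'$ into one on $G$ with the same observed marginal.

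Cases 1 and 2 are immediate, as you say. In case 3 your key move works. You enlarge a latent parent $L$ of $X$ to $(L,R)$, so that $X$ becomes a deterministic function $g(x^{\downarrow},r)$ of its parents, which $Y$ can recompute because $X^{\downarrow}\subseteq Y^{\downarrow}$. The bookkeeping also closes:
\begin{itemize}
\item in the joint on $G$, the factor $\delta_{x,g(x^{\downarrow},r)}$ lets you replace $Q'(y|y^{\downarrow},g(x^{\downarrow},r))$ by $Q'(y|y^{\downarrow},x)$, with $y^{\downarrow}$ the parents of $Y$ in $G$;
\item no other factor involves $r$;
\item summing or integrating $\mu(r)\,\delta_{x,g(x^{\downarrow},r)}$ over $r$ returns $\Pr[g(x^{\downarrow},R)=x]=Q'(x|x^{\downarrow})$.
\end{itemize}

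What your route buys is transparency about where the latent-parent hypothesis enters. The extra randomness has to be hidden in a node that is marginalised out. Without such a node the claim fails: for parentless observed $X$ and $Y$, adding $X\to Y$ creates correlations absent from $\mathcal{C}_G$. The citation is shorter and comes bundled with the further transformation of Theorem~26 of~\cite{HLP_2014}, which the paper does not need.

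Two small corrections:
\begin{enumerate}
\item Drop the remark about finitely valued $R$ ``with rational approximation''. An approximation only places $P$ near $\mathcal{C}_G$, not in it. It is also unnecessary. The product construction you already mention is exact and finitely valued whenever the alphabets are finite: take $R=(R_z)_z$ indexed by the joint values $z$ of $X^{\downarrow}$ (including the value of $L$), with $R_z\sim Q'(\cdot|z)$ and $g(z,r)=r_z$.
\item You do not need $L$ to be parentless. If it has parents, set $Q(l,r|l^{\downarrow})=Q'(l|l^{\downarrow})\,\mu(r)$ and the rest of the argument is unchanged.
\end{enumerate}
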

\begin{lemma}[$\mathcal{C}_{G_1}=\mathcal{I}_{G_1}$]
A distribution is I-compatible with $G_1$ if and only if it is classically compatible with $G_1$.
\label{lm4}
\end{lemma}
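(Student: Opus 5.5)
The plan is to first compute $\mathcal{I}_{G_1}$ explicitly, and then give a direct classical construction in $G_1$ for every distribution in it. The direction ``classically compatible $\Rightarrow$ I-compatible'' is immediate from the result of~\cite{Geiger1987,Verma1988} quoted above, so only the converse needs work. I expect this to be simpler than routing through Theorem~\ref{th2}, although that remains a fallback, e.g.\ by exhibiting a sub-structure of $G_1$ with a single latent parent of $C$ and $E$.

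\emph{Step 1: the d-separation relations among $C,D,E,F$ in $G_1$.} The node $F$ is adjacent to each of $C$, $D$ and $E$, via $C\to F$, the red edge $D\to F$, and $E\to F$. The node $E$ is adjacent to $D$, and $E$ is connected to $C$ through the common cause $B$, via $E\leftarrow B\to C$. Hence none of the pairs $\{C,E\}$, $\{D,E\}$, $\{C,F\}$, $\{D,F\}$, $\{E,F\}$ can be d-separated by any set.

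The only remaining pair is $\{C,D\}$. Every path between $C$ and $D$ passes through a collider, either $E$ (in $D\to E\leftarrow B\to C$) or $F$ (in $D\to F\leftarrow C$, $D\to F\leftarrow A\to C$, $D\to E\to F\leftarrow C$, and so on). Conditioning on $E$ or $F$ opens such a path, and $F$ is a descendant of $E$. So the unique d-separation relation, up to symmetry and set versions that reduce to it, is $C\bot D$ with empty conditioning set. Consequently
\[
\mathcal{I}_{G_1}=\{P(CDEF): P(cd)=P(c)P(d)\}.
\]
This enumeration of paths is the step I would double check most carefully, since missing a d-separation would change the target set.

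\emph{Step 2: the classical construction.} Take $P\in\mathcal{I}_{G_1}$ and factor it as
\[
P(cdef)=P(c)P(d)P(e|cd)P(f|cde).
\]
I will realise each factor with the mechanisms available in $G_1$.
\begin{itemize}
\item Let $A$ be trivial.
\item Let $B=(B_1,B_2)$, where $B_1\sim P(c)$ and, conditionally on $B_1=c$, $B_2$ is a random response function $g$ from values of $D$ to values of $E$. Its law is $Q(g|c)=\prod_{d}P(g(d)|c,d)$, so that $\Pr[g(d)=e\mid c]=P(e|cd)$ for every $d$. Terms with $P(c)=0$ or $P(d)=0$ can be filled arbitrarily.
\item Let $D\sim P(d)$.
\item Set $C=B_1$, which is a deterministic function of its parents $A,B$.
\item Set $E=B_2(D)$, which is a function of its parents $B,D$.
\item Let $F$ be sampled from $P(f|cde)$; this uses only its parents $C,D,E$ and ignores $A$.
\end{itemize}

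\emph{Step 3: verification.} Summing over the latents gives
\[
Q(cdef)=P(c)P(d)\Big[\sum_g Q(g|c)\,\delta_{e,g(d)}\Big]P(f|cde)=P(c)P(d)P(e|cd)P(f|cde)=P(cdef).
\]
The factorisation~\eqref{eq:prod} holds by construction, because $D$ is independent of $B$, and this independence is exactly what the constraint $C\dbot D$ permits. Hence $P\in\mathcal{C}_{G_1}$, which proves $\mathcal{C}_{G_1}=\mathcal{I}_{G_1}$. The red edge $D\to F$ is used essentially in Step 3 to reproduce $P(f|cde)$; that is why $G_1$ is classical, whereas $H_1$, which lacks this edge, is not.
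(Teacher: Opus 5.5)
Your proof is correct, but it takes a different route from the paper's. The paper never writes down a model on $G_1$ directly. It applies the moves of Theorem~\ref{th2}: delete $A\to F$, $A\to C$ and then $A$; add $C\to E$, which is allowed because the parents $\{B\}$ of $C$ lie among the parents $\{B,D\}$ of $E$ and include a latent; then delete $B\to C$, $B\to E$ and $B$. This yields the latent-free DAG $G_1'$ with edges $C\to E$, $D\to E$, $C\to F$, $D\to F$, $E\to F$. The paper then combines the fact from~\cite{HLP_2014} that $\mathcal{C}=\mathcal{I}$ for structures without latents with the observation that $G_1$ and $G_1'$ have the same observed d-separations, giving $\mathcal{I}_{G_1}=\mathcal{I}_{G_1'}=\mathcal{C}_{G_1'}\subseteq\mathcal{C}_{G_1}$.

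Your Step~2 is in effect that argument unpacked by hand. Your factorisation $P(c)P(d)P(e|cd)P(f|cde)$ is precisely a classical model on $G_1'$. Making $C$ a deterministic function of $B$, so that $E$ can read it off the shared latent, is exactly what the edge-adding move encodes.

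Your version is self-contained and explicit: it needs neither Theorem~26 of~\cite{HLP_2014} nor the latent-free result, only a response-function encoding. Your Step~1 also spells out the d-separation bookkeeping that the paper only asserts. The paper's version is a reusable recipe of graph moves, which is what lets it dispatch $G_2$ and $G_3$ ``along the same line'' without designing a new latent encoding each time.
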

\begin{proof}
We apply Theorem~\ref{th2} several times with the following sequence: remove the edges $A\rightarrow F$ and $A\rightarrow C$, remove the isolated latent node $A$, add the edge $C\rightarrow E$, remove the edges $B\rightarrow C$ and $B\rightarrow E$, remove the isolated latent node $B$. Call the resulting causal structure $G_1'$. Theorem~\ref{th2} implies that $\mathcal{C}_{G_1'}\subseteq\mathcal{C}_{G_1}$. However, since $G_1'$ has no latent nodes, it follows from~\cite{HLP_2014} that $\mathcal{C}_{G_1'}=\mathcal{I}_{G_1'}$. Further, since the \(d\)-separation relations between observed nodes are the same in both $G_1$ and $G_1'$, we have $\mathcal{I}_{G_1'}=\mathcal{I}_{G_1}$. Thus $\mathcal{I}_{G_1}\subseteq\mathcal{C}_{G_1}$, but since $\mathcal{C}_{G_1}\subseteq\mathcal{I}_{G_1}$ the claim follows. 
\end{proof}
\begin{theorem}
All NCNFT correlations in $H_1$ are spurious respecting the causal partial order of $H_1$.
\label{th3}
\end{theorem}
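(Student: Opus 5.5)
The plan is to apply Theorem~\ref{th1} with $H=H_1$ and $G=G_1$, so the proof reduces to checking its hypotheses one by one. Both causal structures have the same observed nodes $C,D,E,F$. The only difference between them is the edge $D\to F$, and for that edge we need to confirm three things.

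First, I will verify that $\bef_{G_1}$ is identical to $\bef_{H_1}$. Adding the edge $D\to F$ can only create the relation $D\bef F$. That relation already holds in $H_1$ through the directed path $D\to E\to F$, so the transitive closure is unchanged. No other new directed paths appear, because any path through the new edge starts at $D$ and ends at a descendant of $F$, all of which are already descendants of $D$.

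Second, the condition $\mathcal{C}_{G_1}=\mathcal{I}_{G_1}$ is exactly Lemma~\ref{lm4}. Third, the condition $\mathcal{C}_{H_1}\subset\mathcal{I}_{H_1}$ is the statement that $H_1$ supports non-classical correlations. I would take this from~\cite{HLP_2014}, where $H_1$ is among the causal structures shown to admit non-classical (indeed quantum non-classical) correlations. If a self-contained argument is wanted, one can instead exhibit an entropic or Bell-type inequality valid for $\mathcal{C}_{H_1}$ that is violated by some $I$-compatible distribution.

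The remaining hypothesis, $\mathcal{I}_{H_1}=\mathcal{I}_{G_1}$, is the main step. It amounts to showing that the $d$-separation relations among observed nodes coincide in the two graphs. Since $G_1$ has more edges, every observed $d$-separation of $G_1$ already holds in $H_1$. The work is in the converse: no observed $d$-separation of $H_1$ is destroyed by adding $D\to F$.

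The key observation is that in $H_1$ the nodes $D$ and $F$ are not $d$-separated by any set $Z\subseteq\{C,E\}$:
\begin{itemize}
\item If $E\notin Z$, the path $D\to E\to F$ is open.
\item If $E\in Z$ and $C\notin Z$, the path $D\to E\leftarrow B\to C\to F$ is open.
\item If $C,E\in Z$, the path $D\to E\leftarrow B\to C\leftarrow A\to F$ is open, since both colliders are conditioned on and $A,B$ are latent.
\end{itemize}

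I would then argue, or simply check exhaustively since there are only four observed nodes, that every triple $(X,Y,Z)$ of disjoint observed sets gives the same verdict in both graphs. Any path in $G_1$ that is opened by the new edge can be rerouted through the above open $D$–$F$ connections of $H_1$ with the same conditioning set. The cleanest way to avoid subtle errors here is a direct enumeration of the observed $d$-separations in each graph. The expected list is $C\bot D$, $C\bot D\,|\,F$ fails (collider descendant), and similar cases, with each entry checked in both graphs.

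Once all hypotheses are established, Theorem~\ref{th1} yields that every NCNFT distribution in $H_1$ is classically compatible with, and faithful to, $G_1$. Since $\bef_{G_1}=\bef_{H_1}$, it is therefore spurious respecting the causal partial order of $H_1$.
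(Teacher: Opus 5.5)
Your proposal is correct and follows the paper's proof. You apply Theorem~\ref{th1} with $G=G_1$, use Lemma~\ref{lm4} for $\mathcal{C}_{G_1}=\mathcal{I}_{G_1}$, and observe that the only observed $d$-separation in either graph is $C\bot D$. It survives the added edge $D\to F$ because the sink $F$ is an unconditioned collider on every new $C$--$D$ path, so $\mathcal{I}_{H_1}=\mathcal{I}_{G_1}$.

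Your explicit checks of the partial order and of the $D$--$F$ connections fill in what the paper only asserts. The hypothesis $\mathcal{C}_{H_1}\subset\mathcal{I}_{H_1}$ matters only for non-vacuity, and the paper supplies it through its explicit quantum example in Lemma~\ref{NCNFT_proof} rather than through a citation.
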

\begin{proof}
Consider the causal structure $G_1$ in Figure~\ref{fig:spurious1}(a) for which $\bef_{G_1}$ is identical to $\bef_{H_1}$. From Lemma~\ref{lm4} we have $\mathcal{C}_{G_1}=\mathcal{I}_{G_1}$. The only conditional independence on the observed nodes that follows from the \(d\)-separation relations is $C\indep D$, so $\mathcal{I}_{H_1}=\mathcal{I}_{G_1}$. Therefore, by Theorem~\ref{th1} all NCNFT correlations in $H_1$ are spurious respecting the causal partial order of $H_1$.
\end{proof}

Given Theorem~\ref{th3} it remains to find quantum NCNFT correlations in $H_1$. We proceed to construct a concrete example. The idea behind our construction is that if entanglement is shared by $A$ and $B$, and a Bell basis measurement performed to produce $C$ then the situation is analogous to having $\Lambda$ share entanglement in the instrumental causal structure (Figure~\ref{fig:spurious1}(b)), where the resulting entangled state depends on $C$. We then exploit a known non-classical quantum correlation in the instrumental causal structure from~\cite{chaves2018quantum,van2019quantum} to obtain the violation.

The next lemma gives the concrete reduction from $H_1$ to the instrumental causal structure. The general idea behind this has been called the marginalization piggyback in~\cite{Marinathesis}.
\begin{lemma}
Let $P(CDEF)$ be classically compatible with $H_1$ (cf.\ Figure~\ref{fig:spurious1}(a)). Then $P(DEF)$ is classically compatible with the instrumental causal structure (cf.\ Figure~\ref{fig:spurious1}(b)).
\label{lm1}    
\end{lemma}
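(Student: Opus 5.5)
The plan is to start from the classical model of $H_1$ and read off an instrumental model directly, taking the latent variable $\Lambda$ to be the pair $(A,C)$. Suppose $P(CDEF)$ is classically compatible with $H_1$, so there is a distribution $Q(ABCDEF)=Q(a)Q(b)Q(d)Q(c|ab)Q(e|bd)Q(f|ace)$ whose $ABCDEF$-marginal reproduces $P$. In Figure~\ref{fig:spurious1}(b) the latent node $\Lambda$ points to $E$ and $F$, and $D\to E\to F$. The target is therefore to write
\[
P(def)=\sum_{\lambda}R(\lambda)R(d)R(e|d\lambda)R(f|e\lambda).
\]
This requires $\Lambda$ to be independent of $D$, and $F$ to depend only on $E$ and $\Lambda$. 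Grouping $\Lambda=(A,C)$ handles the $F$ end, because $Q(f|ace)$ depends only on $e$ and $\lambda=(a,c)$.

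The key step is to marginalize out $B$ and then regroup. Summing over $b$ gives
\[
\sum_b Q(a)Q(b)Q(c|ab)Q(e|bd)=Q(a)\,Q(c|a)\,Q(e|a c d).
\]
Here $Q(c|a)=\sum_bQ(b)Q(c|ab)$ and
\[
Q(e|acd)=\frac{\sum_b Q(b)Q(c|ab)Q(e|bd)}{Q(c|a)}\quad\text{(when } Q(c|a)>0\text{)}.
\]
We must check that this really is a conditional distribution of $E$ and does not depend on $D$ in a problematic way. It does depend on $d$, which is allowed, since $D\to E$ is an edge. Normalisation holds because $\sum_e Q(e|bd)=1$.

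We then set $R(\lambda)=Q(a)Q(c|a)$, $R(d)=Q(d)$, $R(e|d\lambda)=Q(e|acd)$ and $R(f|e\lambda)=Q(f|ace)$. The product
\[
Q(a)Q(c|a)Q(d)Q(e|acd)Q(f|ace),
\]
summed over $a$ and $c$, equals $\sum_{abc}Q(abcdef)=P(def)$. This is exactly the instrumental factorisation, with $\Lambda$ independent of $D$ as required, because $A$, $B$ and $C$ are all independent of $D$ under $Q$.

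The main obstacle is the middle step. We must verify that conditioning on $C$, a collider between $A$ and $B$, leaves $E$ depending only on $(a,c,d)$, and that $\Lambda$ stays independent of $D$. This holds because $D$ is a root that is independent of $(A,B,C)$, and because the $B$-dependence is absorbed into $Q(e|acd)$. Zero-probability values of $(a,c)$ are handled by assigning them an arbitrary conditional distribution.
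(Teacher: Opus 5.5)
Your proof is correct. It differs from the paper's only in which variables you bundle into $\Lambda$.

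The paper takes $\Lambda=(A,B,C)$ with $Q(\lambda)=Q(a)Q(b)Q(c|ab)$. Then $Q(e|bd)$ is already a kernel from $(\lambda,d)$ and $Q(f|ace)$ a kernel from $(\lambda,e)$, so the instrumental factorisation is read off with no computation at all.

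Your choice $\Lambda=(A,C)$ forces you to marginalise $B$. You then have to define $Q(e|acd)=\sum_bQ(b)Q(c|ab)Q(e|bd)/Q(c|a)$, check its normalisation, and treat $Q(c|a)=0$ separately. You do all of this correctly: if $Q(c|a)=0$ then $Q(b)Q(c|ab)=0$ for every $b$, so the numerator also vanishes and any kernel works. What your route buys is a latent variable of smaller cardinality, which the lemma does not need.

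The ``collider'' issue you flag as the main obstacle is not really one. Your $Q(e|acd)$ is just the true conditional of $E$ given $(A,C,D)$ under $Q$, and it may depend on $a$ and $c$ arbitrarily since both are in $\Lambda$. In the paper's version the issue never arises, because $B$ stays inside $\Lambda$.

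One small slip: the $CDEF$-marginal of $Q$ reproduces $P$, not the $ABCDEF$-marginal.
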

\begin{proof}
From~\eqref{eqn1}, if $P$ is classically compatible with $H_1$ then
\begin{equation*}
P(def)=\sum_{abc}Q(a)\,Q(b)\,Q(d)\,Q(c| ab)\,Q(e| bd)\,Q(f| ace).
\end{equation*}
Define a new random variable $\Lambda$ with $Q(\lambda=(a,b,c))=Q(a)Q(b)Q(c|ab)$. Noting that $Q(e|bd)=Q(e|abcd)$ and $Q(f|ace)=Q(f|abce)$, we have
\begin{align*}
P(def)=\sum_\lambda Q(\lambda)Q(d)Q(e|\lambda d)Q(f|\lambda e),
\end{align*}
which is the condition required for $P$ to be classical in the instrumental causal structure. 
\end{proof}
We will use the following result, originally proven in~\cite{bonet:instruments}. 
\begin{lemma}
Let $P(DEF)$ be classically compatible with the instrumental causal structure, where the random variables take values $d\in\{0,1,2\}$, $e\in\{0,1\}$ and $f\in\{0,1\}$. Then
\begin{align}\label{eq:Bonet}
P(E{=}F|D{=}0)+P(F{=}0|D{=}1)+P(E{=}0,F{=}1|D{=}2)\!\le\!2.
\end{align}
\label{lm2}  
\end{lemma}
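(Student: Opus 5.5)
The inequality is linear in $P$, and the classically compatible distributions for the instrumental causal structure form a convex set. The plan is therefore to reduce to deterministic response functions and check the bound vertex by vertex. From the instrumental compatibility condition (as in the proof of Lemma~\ref{lm1}) we have
\[
P(ef|d)=\sum_\lambda Q(\lambda)\,Q(e|\lambda d)\,Q(f|\lambda e).
\]
We first absorb any local randomness of $Q(e|\lambda d)$ and $Q(f|\lambda e)$ into $\lambda$. Then, without loss of generality, $e=g_\lambda(d)$ and $f=h_\lambda(e)$ for deterministic functions $g_\lambda:\{0,1,2\}\to\{0,1\}$ and $h_\lambda:\{0,1\}\to\{0,1\}$. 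The left-hand side of~\eqref{eq:Bonet} is then a convex combination over $\lambda$ of
\[
S(g,h)=[\,g(0)=h(g(0))\,]+[\,h(g(1))=0\,]+[\,g(2)=0,\ h(0)=1\,],
\]
where $[\cdot]$ denotes the indicator. It therefore suffices to show $S(g,h)\le 2$ for every pair $(g,h)$.

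For this finite check there are only four response functions $h$, and we will split into cases on $h$.

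\begin{itemize}
\item If $h\equiv 0$, the third term vanishes, so $S\le 2$.
\item If $h\equiv 1$, the second term vanishes, so $S\le 2$.
\item If $h$ is the identity, the first term is always $1$. The second term requires $g(1)=0$, and the third requires $h(0)=1$, which fails. So $S\le 2$.
\item If $h$ is negation, $h(e)=1-e$, the first term requires $g(0)=1-g(0)$, which is impossible. So $S\le 2$.
\end{itemize}

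In every case at most two of the three indicators equal $1$, which gives the bound. Averaging over $\lambda$ with weights $Q(\lambda)$ preserves the inequality, concluding the proof.

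The only step needing some care is the reduction to deterministic strategies. One has to ensure that the local noise in $Q(f|\lambda e)$ can be folded into $\lambda$ while keeping the correct causal dependencies: $F$ depends on $\lambda$ and $E$ only, and not on $D$. This follows by enlarging $\lambda$ to include independent uniform seeds for each conditional distribution, which is the standard argument. The subsequent case analysis is routine.
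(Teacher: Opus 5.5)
Your proof is correct. The reduction to deterministic response functions $e=g_\lambda(d)$, $f=h_\lambda(e)$ is valid. Drawing $g$ and $h$ from $\prod_{d}Q(g(d)|\lambda d)$ and $\prod_{e}Q(h(e)|\lambda e)$ reproduces $Q(e|\lambda d)Q(f|\lambda e)$, and $Q(\lambda)$ does not depend on $d$ because $\Lambda$ and $D$ are independent in the instrumental structure. Your four cases on $h$ cover all $32$ pairs $(g,h)$, and in each case one of the three indicators vanishes.

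The paper gives no proof of its own and simply cites Bonet. There the inequality comes from describing the polytope of instrumental-compatible conditionals $P(ef|d)$, whose vertices are exactly your deterministic pairs $(g,h)$, and identifying this inequality as one of its facets for a three-valued instrument. Your vertex check is self-contained and elementary. It establishes exactly what the paper uses, namely that the bound is valid, so the quantum value $\approx 2.128$ certifies non-classicality via Lemma~\ref{lm1}. The facet computation additionally shows that the inequality belongs to a complete description of the classical set and cannot be strengthened; the paper does not need this. Tightness of the value $2$ is also visible in your case analysis: $h$ the identity with $g(1)=0$ gives $S=2$.
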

To construct our quantum distribution we use the notation $\ket{\theta}=\cos(\theta)\ket{0}+\sin(\theta)\ket{1}$, $\ket{\psi_x}=\cos(x)\ket{00}+\sin(x)\ket{11}$,
\begin{align*}
\ket{m_0}&=\tfrac{1}{\sqrt{2}}\big(\ket{00}+\ket{11}\big),& \ket{m_1}&=\tfrac{1}{\sqrt{2}}\big(\ket{01}+\ket{10}\big),\\ \ket{m_2}&=\tfrac{1}{\sqrt{2}}\big(\ket{01}-\ket{10}\big),&
\ket{m_3}&=\tfrac{1}{\sqrt{2}}\big(\ket{00}-\ket{11}\big).
\end{align*}
In the notation introduced before~\eqref{eqn2}, take the Hilbert spaces $\cH_A^F$, $\cH_A^C$, $\cH_B^C$, $\cH_B^E$ to be two dimensional, and the observed variables to take values $c\in\{0,1,2,3\}$, $d\in\{0,1,2\}$, $e\in\{0,1\}$ and $f\in\{0,1\}$. Then take
\begin{align}
  &\rho_A=\ketbra{\psi_\alpha}{\psi_\alpha},\ \rho_B=\ketbra{\psi_\beta}{\psi_\beta}\label{eq:q1}\\
  &P(d)=1/3,\ N^c=\ketbra{m_c}{m_c}\\
&K^e_d=\ketbra{\epsilon_d + e\tfrac{\pi}{2}}{\epsilon_d + e\tfrac{\pi}{2}}\ \text{ with }\ (\epsilon_0,\epsilon_1,\epsilon_2)=\big(\tfrac{\pi}{4},\,0,\,\tfrac{5\pi}{8}\big)\\
  &M^f_{c,e}=\sigma_c\ketbra{\theta^f_e}{\theta^f_e}\sigma_c^\dagger\ \text{ with }\ \theta^f_e=\frac{\pi}{8}+\phi+e\,\frac{\pi}{4}+f\,\frac{\pi}{2}.  \label{eq:q4}
\end{align}
Here $\{\sigma_c\}_{c=1}^3$ are the Pauli operators and $\sigma_0=\id$, and the form of some of the states and measurements takes inspiration from the result of~\cite{chaves2018quantum,van2019quantum}.                                                                         
\begin{lemma}
  The distribution $P(CDEF)$ generated by the quantum strategy from~\eqref{eq:q1}--\eqref{eq:q4} is NCNFT in $H_1$.
\label{NCNFT_proof}  
\end{lemma}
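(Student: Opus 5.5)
We have to verify the three conditions of NCNFT. \emph{I-compatibility} is immediate: the distribution is quantum mechanically compatible with $H_1$ by construction (it has the form~\eqref{eqn2}), and $\mathcal{Q}_{H_1}\subseteq\mathcal{I}_{H_1}$ by~\cite{HLP_2014}. Concretely, the only required independence is $C\indep D$. We can also check it directly: $P(c|d)=\tr(N^c(\rho_A^{C}\ot\rho_B^{C}))$ is independent of $d$, because $\sum_e K^e_d=\id$.

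\emph{Non-classicality} will follow from Lemmas~\ref{lm1} and~\ref{lm2}. If $P(CDEF)$ were classically compatible with $H_1$, then $P(DEF)$ would be classically compatible with the instrumental structure, so~\eqref{eq:Bonet} would hold. We therefore compute the left-hand side of~\eqref{eq:Bonet} for the quantum strategy and aim to show it exceeds $2$ for suitable $\alpha,\beta,\phi$.

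The key step is entanglement swapping. Outcome $c$ of the Bell measurement on $\cH_A^C\ot\cH_B^C$ leaves $\cH_A^F\ot\cH_B^E$ in an (unnormalised) state proportional to $(\sigma_c'\ot\id)\ket{\psi_{\gamma}}$ for some effective angle $\gamma(\alpha,\beta)$. Here $\sigma_c'$ is a Pauli operator, up to phase and transpose, and for $\alpha=\beta=\pi/4$ everything reduces to a maximally entangled state with a Pauli correction. The rotation $\sigma_c$ in $M^f_{c,e}$ is designed to undo this correction. So, conditional on each $c$, the $(D,E,F)$ statistics should coincide with those of the known violating instrumental strategy of~\cite{chaves2018quantum,van2019quantum}, run on a state independent of $c$.

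I would first try $\alpha=\beta=\pi/4$. In that case $P(c)=1/4$, and the conditional correlations are those of real-rotated projective measurements on $\ket{\Phi^+}$, with $\langle\theta|\langle\theta'|\Phi^+\rangle^2=\tfrac12\cos^2(\theta-\theta')$. Summing the three terms of~\eqref{eq:Bonet} with the given $\epsilon_d$ and $\theta^f_e$ then becomes a trigonometric expression in $\phi$, which we maximise to exhibit a value greater than $2$; the Chaves et al.\ value is $\approx 2.207$. I expect this bookkeeping to be the main computational obstacle: tracking the Pauli corrections, including transposes and signs of $\sigma_2$, so that $\sigma_c$ correctly undoes them for every $c$.

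\emph{Faithfulness} is where non-maximal $\alpha,\beta$ and generic $\phi$ may matter. We must show that no conditional independence other than those implied by $C\indep D$ holds. We have to exclude every statement $X\indep Y|Z$ over disjoint subsets of $\{C,D,E,F\}$ that is not d-separated in $H_1$. Using the semi-graphoid properties, it suffices to show the following:
\begin{itemize}
\item the pairwise dependences $C$--$E$, $C$--$F$, $D$--$E$, $D$--$F$, $E$--$F$ hold;
\item $C$ and $D$ are dependent given $E$, given $F$, and given $EF$;
\item all remaining relations with conditioning sets hold as dependences.
\end{itemize}
This is a finite list, which we check by exhibiting explicit probability values.

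Maximal entanglement with $\alpha=\beta=\pi/4$ may produce accidental independences. For example, $C$ would be independent of $E$ and of $F$ marginally, since the local marginals are uniform. I would therefore perturb $\alpha,\beta$ slightly away from $\pi/4$. Because the Bonet violation is strict, it survives small perturbations by continuity. Each required dependence is a nonzero polynomial or trigonometric function of the parameters, so it holds for generic values. Choosing concrete values, for instance $\alpha=\beta$ close to but not equal to $\pi/4$, we verify both the violation and all the dependences numerically or symbolically.
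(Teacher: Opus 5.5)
Your plan follows the paper's proof. Entanglement swapping turns each branch $C=c$ into the instrumental strategy of Chaves et al. The Bonet value exceeds $2$ near $\alpha=\beta=\pi/4$, $\phi=0$, where it equals $(3+\sqrt2)/2$. The parameters are then moved off the fine-tuned points, followed by an exhaustive check that $C\indep D$ is the only conditional independence; the paper does this at $\alpha=7\pi/32$, $\beta=6\pi/32$, $\phi=\pi/32$, obtaining a value of $\approx 2.128$.

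One correction: for $\alpha\neq\pi/4$ the swapped state is not $c$-independent. Up to the Pauli correction on $\cH_A^F$, it is $\ket{\psi_{\zeta_0}}$ for $c\in\{0,3\}$ and $\ket{\psi_{\zeta_1}}$ for $c\in\{1,2\}$, with $\tan\zeta_0=\tan\alpha\tan\beta$ and $\tan\zeta_1=\cot\alpha\tan\beta$. This $c$-dependence, which breaks $C\indep E$, together with $\phi\neq0$, which breaks $C\indep F\,|\,D$, is exactly what your genericity argument needs.
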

\begin{proof}
Let $\chi_c$ be the density operator on $\cH_A^F\ot\cH_B^E$ conditioned on $C=c$. We have
\begin{align*}
  \chi_0&=\ketbra{\psi_{\zeta_0}}{\psi_{\zeta_0}}\\
  \chi_1&=(\sigma_1\ot\id)\ketbra{\psi_{\zeta_1}}{\psi_{\zeta_1}}(\sigma_1\ot\id)\\
  \chi_2&=(\sigma_2\ot\id)\ketbra{\psi_{\zeta_1}}{\psi_{\zeta_1}}(\sigma_2\ot\id)\\
  \chi_3&=(\sigma_3\ot\id)\ketbra{\psi_{\zeta_0}}{\psi_{\zeta_0}}(\sigma_3\ot\id),\ \text{ where}\\
          \tan(\zeta_0)&=\tan(\alpha)\tan(\beta)\text{ and }\tan(\zeta_1)=\cot(\alpha)\tan(\beta).
\end{align*}
The distribution over $C$ is $P(C=0)=P(C=3)=\tfrac{1}{4}\left(1+\cos(2\alpha)\cos(2\beta)\right)$ and $P(C=1)=P(C=2)=\tfrac{1}{4}\left(1-\cos(2\alpha)\cos(2\beta)\right)$.

We proceed to compute the value of the left-hand-side of~\eqref{eq:Bonet}.
In the case $C=0$, we have
\begin{align}
  P&(E=F|D=0,C=0)=\sum_e\tr((M^e_{0,e}\ot K^e_0)\chi_0)\nonumber\\
     &=\frac{1}{4}\left(2+\sin(2\zeta_0+\tfrac{\pi}{4}+2\phi)+\sin(2\zeta_0+\tfrac{\pi}{4}-2\phi)\right)\nonumber\\
  &=\frac{1}{2}\left(1+\cos(2\phi)\sin(2\zeta_0+\tfrac{\pi}{4})\right)\label{p1}
\end{align}
Similarly,
\begin{align}
  P&(F=0|D=1,C=0)=\sum_e\tr((M^0_{0,e}\ot K^e_1)\chi_0)\nonumber\\
  &=\cos^2(\phi+\tfrac{\pi}{8})\cos^2(\zeta_0)+\cos^2(\phi-\tfrac{\pi}{8})\sin^2(\zeta_0)\label{p2}
\end{align}
and
\begin{align}
  &P(E=0,F=1|D=2,C=0)=\tr((M^1_{0,0}\ot K^0_2)\chi_0)\nonumber\\
  &=\left(\cos(\zeta_0)\sin(\phi+\tfrac{\pi}{8})\sin(\tfrac{\pi}{8})+\sin(\zeta_0)\cos(\phi+\tfrac{\pi}{8})\cos(\tfrac{\pi}{8})\right)^2\label{p3}
\end{align}
Define the sum of the right-hand-sides of~\eqref{p1},~\eqref{p2} and~\eqref{p3} to be $I(\zeta_0,\phi)$. By construction, in the case $C=3$ we also get $I(\zeta_0,\phi)$, while for $C=1$ and $C=2$ we get $I(\zeta_1,\phi)$ (for each $c$, the Pauli operators in the definition of $M^f_{c,e}$ cancel those in $\chi_c$).

Note that
\[I(\zeta_0,0)=\frac{1}{8}\left(11+2\sqrt{2}+(1+2\sqrt{2})\sin(2\zeta_0)\right),\]
which is greater than the classical bound of $2$ for $\sin(2\zeta_0)>(5-2\sqrt2)/(1+2\sqrt2)\approx0.567$, with a maximum value of $(3+\sqrt2)/2\approx2.207$ at $\zeta_0=\pi/4$. However, we cannot take $\phi=0$ because then $P(F|CD)=P(F|D)$, which would correspond to a fine tuning. We hence choose $\phi$ to be small but non-zero to avoid this. Similarly, taking $\alpha=\pi/4$ or $\beta=\pi/4$ (which would correspond to the measurement generating $C$ doing perfect teleportation) leads to fine tuning ($C\dbot E$), so we choose values close to these instead.

Concretely, for $\alpha=7\pi/32$, $\beta=6\pi/32$, and $\phi=\pi/32$ we find the left-hand-side of~\eqref{eq:Bonet} evaluates to
\begin{align*}
P(C\in&\{0,3\})I(\zeta_0,\phi)+P(C\in\{1,2\})I(\zeta_1,\phi)\\
  =\frac{1}{32}\Big(&43+3\sqrt{2}-4\sin(\tfrac{\pi
   }{16})+8\sin(\tfrac{\pi}{8})+4\sin\tfrac{3\pi}{16})\\
  &+8\cos(\tfrac{\pi}{16})+2\cos(\tfrac{\pi}{8})+8\cos(\tfrac{3\pi}{16})\Big)\\
  \approx2.128
\end{align*}
Thus, by Lemma~\ref{lm2}, $P(DEF)$ is not classically compatible with the instrumental scenario, and so, by Lemma~\ref{lm1}, $P(CDEF)$ is not classically compatible with $H_1$. An exhaustive check of all conditional independence relations involving subsets of $C$, $D$, $E$ and $F$ confirms that $C\indep D$ is the only one satisfied by $P(CDEF)$. Thus, $P(CDEF)$ is NCNFT in $H_1$.
\end{proof}

\begin{figure}[t]
\centering
\begin{tikzpicture}
\begin{scope}[xshift=-0.84cm]
\node[lat] (A2) at (0,2.2)   {$A$};
\node[lat] (B2) at (1.5,2.2) {$B$};
\node[obs] (D2) at (3.0,2.2) {$D$};
\node[obs] (C2) at (0.75,1.1){$C$};
\node[obs] (E2) at (2.25,1.1){$E$};
\node[obs] (F2) at (1.5,0)   {$F$};
\draw[ed] (A2) -- (C2);
\draw[ed] (B2) -- (C2);
\draw[ed] (B2) -- (E2);
\draw[ed] (D2) -- (E2);
\draw[ed] (C2) -- (E2);
\draw[ed] (E2) -- (F2);
\draw[ed] (A2) to[bend right=40] (F2);
\draw[ed, draw=red] (D2) to[bend left=35] (F2);
\node at (1.5,-1.285) {(a)};
\end{scope}
\begin{scope}[xshift=4.15cm]
\node[lat] (A2) at (0,2.2)   {$A$};
\node[lat] (B2) at (1.5,2.2) {$B$};
\node[obs] (D2) at (3.0,2.2) {$D$};
\node[obs] (C2) at (0.75,1.1){$C$};
\node[obs] (E2) at (2.25,1.1){$E$};
\node[obs] (F2) at (1.5,0)   {$F$};
\draw[ed] (A2) -- (C2);
\draw[ed] (B2) -- (C2);
\draw[ed] (B2) -- (E2);
\draw[ed] (D2) -- (E2);
\draw[ed] (F2) -- (C2);
\draw[ed] (E2) -- (F2);
\draw[ed] (A2) to[bend right=40] (F2);
\draw[ed, draw=red] (D2) -- (C2);
\draw[ed, draw=red] (E2) -- (C2);
\node at (1.5,-1.285) {(b)};
\end{scope}
\end{tikzpicture}
\caption{(a) The causal structures $H_2$ (without red arrow) and $G_2$ (with red arrow). (b) The causal structures $H_3$ (without red arrows) and $G_3$ (with red arrows).}
\label{fig:spurious2}
\end{figure}

\begin{figure}[t]
\begin{tikzpicture}
\begin{scope}[yshift=-0.45cm]
\node[lat] (B1) at (0.6,2.1) {$B$};
\node[obs] (A1) at (1.7,2.1) {$A$};
\node[lat] (C1) at (2.8,2.1) {$C$};
\node[obs] (F1) at (0.2,0.6) {$F$};
\node[obs] (D1) at (1.7,0.6) {$D$};
\node[obs] (E1) at (3.2,0.6) {$E$};
\draw[ed] (B1) -- (F1); \draw[ed] (B1) -- (D1);
\draw[ed] (C1) -- (D1); \draw[ed] (C1) -- (E1);
\draw[ed] (A1) -- (F1); %controls (1.1,3.7) and (-0.3,2.8) .. ;
\draw[ed] (A1) -- (E1); %controls (2.3,3.7) and (3.7,2.8) .. ;
\node at (1.7,-0.85) {(a)};
\end{scope}
\begin{scope}[xshift=4.785cm]
\node[lat] (L1) at (0.8,2.1) {$A$};
\node[lat] (L2) at (2.6,2.1) {$B$};
\node[obs] (x) at (0,1.32)   {$E$};
\node[obs] (y) at (1.7,1.32) {$C$};
\node[obs] (z) at (3.4,1.32) {$F$};
\node[obs] (a) at (0,0)   {$G$};
\node[obs] (b) at (1.7,0) {$D$};
\node[obs] (c) at (3.4,0) {$H$};
\draw[ed] (L1) -- (a); \draw[ed] (L1) -- (b);
\draw[ed] (L2) -- (b); \draw[ed] (L2) -- (c);
\draw[ed] (x) -- (a); \draw[ed] (y) -- (b); \draw[ed] (z) -- (c);
\node at (1.7,-1.35) {(b)};
\end{scope}
\end{tikzpicture}
\caption{Causal structures (a) $H_4$ and (b) Bi-locality in which all NCNFT correlations are not spurious.}
\label{fig:genuine}
\end{figure}
Using similar reasoning, we find spurious non-classical (quantum) correlations respecting the causal partial order in the causal structures $H_2$ and $H_3$ in Figures~\ref{fig:spurious2}(a) and~\ref{fig:spurious2}(b). The corresponding classically compatible explanations are provided by $G_2$ and $G_3$ respectively (shown in the same figures). This can be proven using Theorem~\ref{th1} (we omit the proof as it follows the same line as the case of $H_1$, previously discussed). The triangle and the instrumental causal structures also exhibit spurious quantum correlations respecting the causal partial order, since any non-classical quantum correlations in them is classically compatible with the causal structure where all the observed variables share a single common cause.

We also find that for some other causal structures (shown in Figure~\ref{fig:genuine}) there are NCNFT correlations that are not spurious.
\begin{theorem}
Let $H$ be one of the causal structures in Figure~\ref{fig:genuine} and $P$ be NCNFT in $H$. $P$ is not spurious.
\label{th4}
\end{theorem}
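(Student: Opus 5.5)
We argue by contradiction in the same spirit as Wood and Spekkens. Suppose $P$ is NCNFT in $H$ and there is a causal structure $G$ in which $P$ is classically compatible and faithful. Faithfulness means that the conditional independences among the observed variables of $P$ are exactly the $d$-separation relations of $G$ on the observed nodes. Since $P$ is also faithful to $H$, these are exactly the $d$-separations of $H$. Hence $G$ and $H$ must have identical observed $d$-separation relations: $\mathcal{I}_G=\mathcal{I}_H$ as sets of implied independences. The proof then has two parts. First, we characterise every $G$ (with arbitrary latent structure) whose observed $d$-separations coincide with those of $H$. Second, we show that each such $G$ has $\mathcal{C}_G\subseteq\mathcal{C}_H$. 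This contradicts the assumption that $P\notin\mathcal{C}_H$.

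\textbf{Characterising $G$ for $H_4$.} The observed nodes are $A,D,E,F$. The independences are $A\indep D$, together with those that follow from them in $H_4$, for instance $F\indep E\,|\,A$ fails and similar relations; we would list them all via $d$-separation. The key structural facts are as follows. $A$ and $D$ are $d$-separated marginally but dependent given $F$ (or given $E$). Also, $F$ and $E$ are dependent given any conditioning set. From the standard skeleton/collider reasoning that underlies IC*, now allowing arbitrary (not only pairwise) latent nodes, we argue:
\begin{itemize}
\item[(i)] Every pair other than $\{A,D\}$ must be connected in $G$ by an edge or a latent-mediated path that cannot be blocked.
\item[(ii)] The $v$-structures at $F$ and $E$ force $F$ and $E$ to be colliders on the $A$--$D$ paths, so $A\to F$, $A\to E$, and the connections $D$--$F$ and $D$--$E$ must point into $F$ and $E$.
\item[(iii)] Since $F\indep E\,|\,AD$ must fail in $G$ exactly as it fails in $H_4$, no directed edge between $F$ and $E$ can exist without creating a spurious dependence or independence. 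Similarly, $D$ cannot be a parent of $F$ or $E$ with $A\indep D$ retained only through colliders.
\end{itemize}
The upshot is that $G$ must contain latent common causes of $(F,D)$ and $(D,E)$, possibly merged or enlarged into one latent for $\{F,D,E\}$. The possibilities to rule out are a single common latent for $D,E,F$, or observed edges $D\to F$, $D\to E$. Each of these introduces no new independence violation, but would break a $d$-separation, for example $F\indep E\,|\,AD$ in the edge case. We then check which of these are consistent with the full list of independences of $H_4$. We expect the survivors to be $H_4$ itself, together with variants obtained by giving latents extra parentless structure or by merging latents in ways that still produce the same observed $d$-separations. For each survivor, the transformations of Theorem~\ref{th2} (in reverse), or the observation that latents with parents can be replaced by parentless ones, show that $\mathcal{C}_G\subseteq\mathcal{C}_{H_4}$.

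\textbf{Bi-locality.} The same template applies to Bi-locality. Here the inputs $E,C,F$ are mutually independent and independent of the latents, and we have $G\indep H\,|\,\emptyset$, $G\indep F$, $H\indep E$, $G\indep C$, and so on. From these independences and the required dependences we deduce the following. Each input must point into its output. $G$ and $H$ must be disconnected except through $D$. $C$ cannot be a direct parent of $G$ or $H$. The dependence between $G$ and $D$, and between $D$ and $H$, given the inputs cannot be mediated by a directed edge between observed outputs, because, for example, $D\to G$ would make $G$ depend on $C$. This forces separate latent common causes of $\{G,D\}$ and $\{D,H\}$; a common latent for all three would break $G\indep H$. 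Any additional edges among latents, or latent parents, are eliminated or absorbed as before, so every faithful $G$ realises at most $\mathcal{C}_H$.

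The main obstacle is the first part. In its full generality, with latents of arbitrary arity and possibly observed-to-latent edges, the enumeration of all $d$-separation-equivalent structures is not covered by IC*'s pairwise restriction. I would first reduce to parentless latents, using the remark in the introduction that this does not shrink the classical sets. I would then use the fact that $d$-separation equivalence is determined by the induced ancestral graph (a MAG, the maximal ancestral graph over the observed nodes). Since the MAG, or its PAG, is computable from the independences, every faithful $G$ must have the MAG of $H$. What remains is to show that, for these two small graphs, every latent DAG with that MAG classically simulates no more than $H$. This last step is where care, and a finite case check, is needed.
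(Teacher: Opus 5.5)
You have the paper's overall plan. A faithful classical $G$ must have exactly the observed $d$-separations of $H$, so you enumerate such $G$ and reduce each survivor to $H$ with the transformations of~\cite{HLP_2014}. The paper does the enumeration by first excluding common causes of three or more observed nodes and then running IC*; you propose the MAG/PAG instead.

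The genuine gap is that your $d$-separation list for $H_4$ is wrong. You state that $F$ and $E$ are dependent given every conditioning set, and that $F\indep E\,|\,A$ fails. In fact the only $E$--$F$ paths in $H_4$ are $E\leftarrow A\to F$, blocked by $A$, and $E\leftarrow C\to D\leftarrow B\to F$, blocked at the collider $D$. So $E\bot F\,|\,A$ holds, and the observed $d$-separations are exactly $A\bot D$ and $E\bot F\,|\,A$.

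This missing relation is the one that does the work:
\begin{itemize}
\item Together with $A\bot D$, it excludes every common cause of three observed nodes, since each triple from $\{A,D,E,F\}$ contains $\{A,D\}$ or $\{E,F\}$. This is exactly the paper's first step.
\item It excludes the $\{D,E,F\}$ latent, the edges $D\to E$ and $D\to F$, and any edge between $E$ and $F$, because each opens an $E$--$F$ path given $A$.
\item Since $D$ is not in the separating set $\{A\}$ of $E,F$, it makes $F$--$D$--$E$ an unshielded collider. That puts arrowheads at $D$ and forces the $F$--$D$ and $D$--$E$ links to be latent common causes.
\end{itemize}
With only $A\bot D$, the structure with a single latent parent of $D,E,F$ plus $A\to E$, $A\to F$ would be faithful. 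It supports distributions violating $E\bot F\,|\,A$, so it cannot be reduced to $H_4$, and your argument as written cannot exclude it. The reasons in your (iii) do not repair this. A relation that must \emph{fail}, such as $F\indep E\,|\,AD$, can never exclude an edge, because adding edges only destroys $d$-separations. Likewise, ``breaking $F\indep E\,|\,AD$'' excludes nothing, because that is not a $d$-separation of $H_4$.

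Once the list is corrected, your MAG route is a sound and somewhat more general replacement for the paper's arity argument. A parentless latent with two observed children makes them inseparable, hence adjacent in the MAG. Neither skeleton has three pairwise adjacent nodes: for $H_4$ it is the $4$-cycle $A$--$F$--$D$--$E$--$A$, and for bi-locality it is $E$--$G$--$D$--$H$--$F$ plus $C$--$D$. So no latent can have three observed children.

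Two corrections remain:
\begin{itemize}
\item A faithful $G$ need only have a MAG Markov equivalent to that of $H$ (same PAG), not the same MAG. For example, replacing $A\to E$ in $H_4$ by a latent common cause of $A$ and $E$ preserves all observed $d$-separations but changes the MAG.
\item Membership of the equivalence class does not by itself bound the classical set, so each member must still be reduced to $H$. In the example, the latent is $A$'s only parent and its only children are $A$ and $E$, so it can be absorbed into $E$'s response function.
\end{itemize}
You leave this as a finite case check. The paper likewise only asserts it via IC* and~\cite{HLP_2014}, so that part matches the paper's level of detail, but it goes through only with the corrected $H_4$ relations. Your bi-locality sketch agrees with the paper's.
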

The idea behind the proof is to argue that common causes of three or more observed variables are incompatible with the \(d\)-separation relations between the observed nodes (without fine tuning), and then to use IC* to rule out alternative causal structures with at most bipartite common causes.
\begin{proof}
For $H_4$ the \(d\)-separation relations involving only observed nodes are $\{A \bot D, E \bot F | A\}$. Any common cause of three (or more) observed nodes will be a common cause for either $A$ and $D$ or $E$ and $F$. However, since $A\bot D$ and $E\bot F|A$, this will lead to fine tuning. Thus, faithful correlations that are \Icomp{} with $H_4$ can have at most bi-partite common causes. For correlations with $A \dbot D$ and $E \dbot F | A$ (and no other conditional independences), the IC* algorithm returns a set of causal structures all of which are reducible to $H_4$ using the techniques of~\cite{HLP_2014}. Thus, NCNFT correlations in $H_4$ cannot be spurious.

For the bi-locality causal structure, the \(d\)-separation relations involving only observed nodes are generated by $\{E\bot CDFH,\;F\bot CDEG,\;C\bot EGFH,\; CFH\bot EG,\; G\bot H\}$ (via the semi-graphoid axioms~\cite{pearl2022graphoids}). From these \(d\)-separations it follows that a common cause of $\{E,C\}$, $\{E,F\}$, $\{C,F\}$, $\{E,D\}$, $\{E,H\}$, $\{C,G\}$, $\{C,H\}$, $\{F,G\}$ or $\{F,D\}$ will require fine tuning. This leaves the only possible common cause of more than 2 observed variables as a common cause of $\{G,D,H\}$. However, with such a common cause fine tuning would be needed to satisfy $G\dbot H$. Hence only bi-partite common causes can avoid fine tuning. For correlations with the conditional independences implied by the \(d\)-separation relations above, the IC* algorithm returns a set of causal structures all of which are reducible to the bi-locality structure using the techniques of~\cite{HLP_2014}.

Hence, for both the causal structures in Figure~\ref{fig:genuine}, since there do not exist any other alternate causal structures that exhibit the same corresponding conditional independences, we have that all NCNFT correlations in them are not spurious.
\end{proof}

\section{Conclusion}
We have given several examples of causal structures that have spurious quantum correlations, and several others that do not. In the spurious cases, a conclusion that quantum correlations are present could be due to using the wrong causal structure, with a classical explanation being possible in another. The quantum nature of such correlations would hence not be inferrable by causal inference alone but would require additional analysis. Furthermore, the alternative classical explanation would remain compatible with relativity theory based on the spacetime locations where the variables are generated, so compatibility with relativity can also not help eliminate the classical alternative.

In the cases without spurious correlations, we get a more direct reason to believe in the quantum nature of particular observed correlations: these correlations do not have a (non fine tuned) classical explanation in any other causal structure. These non-spurious correlations are hence natural candidates for use in information-processing tasks, analogously to the use of non-classical correlations for device-independent cryptographic tasks such as quantum key distribution or randomness expansion in Bell causal structures (see, e.g.,~\cite{PUL&} for a review).

\subsection*{Acknowledgements}
We thank Robert Spekkens and Elie Wolfe for useful conversations.
%\bibliography{new_refs}

\begin{thebibliography}{14}%
\makeatletter
\providecommand \@ifxundefined [1]{%
 \@ifx{#1\undefined}
}%
\providecommand \@ifnum [1]{%
 \ifnum #1\expandafter \@firstoftwo
 \else \expandafter \@secondoftwo
 \fi
}%
\providecommand \@ifx [1]{%
 \ifx #1\expandafter \@firstoftwo
 \else \expandafter \@secondoftwo
 \fi
}%
\providecommand \natexlab [1]{#1}%
\providecommand \enquote  [1]{``#1''}%
\providecommand \bibnamefont  [1]{#1}%
\providecommand \bibfnamefont [1]{#1}%
\providecommand \citenamefont [1]{#1}%
\providecommand \href@noop [0]{\@secondoftwo}%
\providecommand \href [0]{\begingroup \@sanitize@url \@href}%
\providecommand \@href[1]{\@@startlink{#1}\@@href}%
\providecommand \@@href[1]{\endgroup#1\@@endlink}%
\providecommand \@sanitize@url [0]{\catcode `\\12\catcode `\$12\catcode
  `\&12\catcode `\#12\catcode `\^12\catcode `\_12\catcode `\%12\relax}%
\providecommand \@@startlink[1]{}%
\providecommand \@@endlink[0]{}%
\providecommand \url  [0]{\begingroup\@sanitize@url \@url }%
\providecommand \@url [1]{\endgroup\@href {#1}{\urlprefix }}%
\providecommand \urlprefix  [0]{URL }%
\providecommand \Eprint [0]{\href }%
\providecommand \doibase [0]{http://dx.doi.org/}%
\providecommand \selectlanguage [0]{\@gobble}%
\providecommand \bibinfo  [0]{\@secondoftwo}%
\providecommand \bibfield  [0]{\@secondoftwo}%
\providecommand \translation [1]{[#1]}%
\providecommand \BibitemOpen [0]{}%
\providecommand \bibitemStop [0]{}%
\providecommand \bibitemNoStop [0]{.\EOS\space}%
\providecommand \EOS [0]{\spacefactor3000\relax}%
\providecommand \BibitemShut  [1]{\csname bibitem#1\endcsname}%
\let\auto@bib@innerbib\@empty
%</preamble>
\bibitem [{\citenamefont {Bell}(1964)}]{bell}%
  \BibitemOpen
  \bibfield  {author} {\bibinfo {author} {\bibfnamefont {J.~S.}\ \bibnamefont
  {Bell}},\ }\bibfield  {title} {\enquote {\bibinfo {title} {{On the
  Einstein-Podolsky-Rosen paradox}},}\ }\href {\doibase
  10.1103/PhysicsPhysiqueFizika.1.195} {\bibfield  {journal} {\bibinfo
  {journal} {Physics}\ }\textbf {\bibinfo {volume} {1}},\ \bibinfo {pages}
  {195--200} (\bibinfo {year} {1964})}\BibitemShut {NoStop}%
\bibitem [{\citenamefont {Bell}(2004)}]{bellnouvelle}%
  \BibitemOpen
  \bibfield  {author} {\bibinfo {author} {\bibfnamefont {J.~S.}\ \bibnamefont
  {Bell}},\ }\bibfield  {title} {\enquote {\bibinfo {title} {La nouvelle
  cuisine},}\ }in\ \href {\doibase 10.1017/CBO9780511815676.026} {\emph
  {\bibinfo {booktitle} {Speakable and Unspeakable in quantum mechanics}}}\
  (\bibinfo  {publisher} {Cambridge University Press},\ \bibinfo {year}
  {2004})\ pp.\ \bibinfo {pages} {232--248}\BibitemShut {NoStop}%
\bibitem [{\citenamefont {Wood}\ and\ \citenamefont {Spekkens}(2015)}]{wood}%
  \BibitemOpen
  \bibfield  {author} {\bibinfo {author} {\bibfnamefont {C.~J.}\ \bibnamefont
  {Wood}}\ and\ \bibinfo {author} {\bibfnamefont {R.~W.}\ \bibnamefont
  {Spekkens}},\ }\bibfield  {title} {\enquote {\bibinfo {title} {The lesson of
  causal discovery algorithms for quantum correlations: Causal explanations of
  {B}ell-inequality violations require fine-tuning},}\ }\href {\doibase
  10.1088/1367-2630/17/3/033002} {\bibfield  {journal} {\bibinfo  {journal}
  {New J. Phys.}\ }\textbf {\bibinfo {volume} {17}},\ \bibinfo {pages} {033002}
  (\bibinfo {year} {2015})}\BibitemShut {NoStop}%
\bibitem [{\citenamefont {Pearl}(2009)}]{pearl}%
  \BibitemOpen
  \bibfield  {author} {\bibinfo {author} {\bibfnamefont {J.}~\bibnamefont
  {Pearl}},\ }\href {\doibase 10.1017/CBO9780511803161} {\emph {\bibinfo
  {title} {{Causality}}}},\ \bibinfo {edition} {2nd}\ ed.\ (\bibinfo
  {publisher} {Cambridge University Press},\ \bibinfo {year}
  {2009})\BibitemShut {NoStop}%
\bibitem [{\citenamefont {Verma}(1993)}]{verma1993graphical}%
  \BibitemOpen
  \bibfield  {author} {\bibinfo {author} {\bibfnamefont {T.}~\bibnamefont
  {Verma}},\ }\bibfield  {title} {\enquote {\bibinfo {title} {Graphical aspects
  of causal models},}\ }\href {https://ftp.cs.ucla.edu/pub/stat_ser/r191.pdf}
  {\bibfield  {journal} {\bibinfo  {journal} {Technical Report R-191, UCLA}\ }
  (\bibinfo {year} {1993})}\BibitemShut {NoStop}%
\bibitem [{\citenamefont {Henson}\ \emph {et~al.}(2014)\citenamefont {Henson},
  \citenamefont {Lal},\ and\ \citenamefont {Pusey}}]{HLP_2014}%
  \BibitemOpen
  \bibfield  {author} {\bibinfo {author} {\bibfnamefont {J.}~\bibnamefont
  {Henson}}, \bibinfo {author} {\bibfnamefont {R.}~\bibnamefont {Lal}}, \ and\
  \bibinfo {author} {\bibfnamefont {M.~F.}\ \bibnamefont {Pusey}},\ }\bibfield
  {title} {\enquote {\bibinfo {title} {{Theory-independent limits on
  correlations from generalized Bayesian networks}},}\ }\href {\doibase
  10.1088/1367-2630/16/11/113043} {\bibfield  {journal} {\bibinfo  {journal}
  {New J. Phys.}\ }\textbf {\bibinfo {volume} {16}},\ \bibinfo {pages} {113043}
  (\bibinfo {year} {2014})}\BibitemShut {NoStop}%
\bibitem [{\citenamefont {Geiger}(1987)}]{Geiger1987}%
  \BibitemOpen
  \bibfield  {author} {\bibinfo {author} {\bibfnamefont {D.}~\bibnamefont
  {Geiger}},\ }\href {http://fmdb.cs.ucla.edu/Treports/880053.pdf} {\emph
  {\bibinfo {title} {Towards the Formalization of Informational
  Dependencies}}},\ \bibinfo {type} {Tech. Rep.}\ \bibinfo {number} {880053}\
  (\bibinfo  {institution} {UCLA Computer Science},\ \bibinfo {year}
  {1987})\BibitemShut {NoStop}%
\bibitem [{\citenamefont {Verma}\ and\ \citenamefont
  {Pearl}(1988)}]{Verma1988}%
  \BibitemOpen
  \bibfield  {author} {\bibinfo {author} {\bibfnamefont {T.}~\bibnamefont
  {Verma}}\ and\ \bibinfo {author} {\bibfnamefont {J.}~\bibnamefont {Pearl}},\
  }\bibfield  {title} {\enquote {\bibinfo {title} {Causal networks: Semantics
  and expressiveness},}\ }in\ \href {\doibase 10.48550/arXiv.1304.2379} {\emph
  {\bibinfo {booktitle} {Proceedings of the 4th Workshop on Uncertainty in
  Artificial Intelligence}}}\ (\bibinfo {year} {1988})\ pp.\ \bibinfo {pages}
  {352--359}\BibitemShut {NoStop}%
\bibitem [{\citenamefont {Chaves}\ \emph {et~al.}(2018)\citenamefont {Chaves},
  \citenamefont {Carvacho}, \citenamefont {Agresti}, \citenamefont {Di~Giulio},
  \citenamefont {Aolita}, \citenamefont {Giacomini},\ and\ \citenamefont
  {Sciarrino}}]{chaves2018quantum}%
  \BibitemOpen
  \bibfield  {author} {\bibinfo {author} {\bibfnamefont {R.}~\bibnamefont
  {Chaves}}, \bibinfo {author} {\bibfnamefont {G.}~\bibnamefont {Carvacho}},
  \bibinfo {author} {\bibfnamefont {I.}~\bibnamefont {Agresti}}, \bibinfo
  {author} {\bibfnamefont {V.}~\bibnamefont {Di~Giulio}}, \bibinfo {author}
  {\bibfnamefont {L.}~\bibnamefont {Aolita}}, \bibinfo {author} {\bibfnamefont
  {S.}~\bibnamefont {Giacomini}}, \ and\ \bibinfo {author} {\bibfnamefont
  {F.}~\bibnamefont {Sciarrino}},\ }\bibfield  {title} {\enquote {\bibinfo
  {title} {Quantum violation of an instrumental test},}\ }\href
  {https://www.nature.com/articles/s41567-017-0008-5} {\bibfield  {journal}
  {\bibinfo  {journal} {Nature Physics}\ }\textbf {\bibinfo {volume} {14}},\
  \bibinfo {pages} {291--296} (\bibinfo {year} {2018})}\BibitemShut {NoStop}%
\bibitem [{\citenamefont {Van~Himbeeck}\ \emph {et~al.}(2019)\citenamefont
  {Van~Himbeeck}, \citenamefont {Brask}, \citenamefont {Pironio}, \citenamefont
  {Ramanathan}, \citenamefont {Sainz},\ and\ \citenamefont
  {Wolfe}}]{van2019quantum}%
  \BibitemOpen
  \bibfield  {author} {\bibinfo {author} {\bibfnamefont {T.}~\bibnamefont
  {Van~Himbeeck}}, \bibinfo {author} {\bibfnamefont {J.~B.}\ \bibnamefont
  {Brask}}, \bibinfo {author} {\bibfnamefont {S.}~\bibnamefont {Pironio}},
  \bibinfo {author} {\bibfnamefont {R.}~\bibnamefont {Ramanathan}}, \bibinfo
  {author} {\bibfnamefont {A.~B.}\ \bibnamefont {Sainz}}, \ and\ \bibinfo
  {author} {\bibfnamefont {E.}~\bibnamefont {Wolfe}},\ }\bibfield  {title}
  {\enquote {\bibinfo {title} {Quantum violations in the instrumental scenario
  and their relations to the {B}ell scenario},}\ }\href {\doibase
  10.22331/q-2019-09-16-186} {\bibfield  {journal} {\bibinfo  {journal}
  {Quantum}\ }\textbf {\bibinfo {volume} {3}},\ \bibinfo {pages} {186}
  (\bibinfo {year} {2019})}\BibitemShut {NoStop}%
\bibitem [{\citenamefont {Maciel~Ansanelli}(2026)}]{Marinathesis}%
  \BibitemOpen
  \bibfield  {author} {\bibinfo {author} {\bibfnamefont {M.}~\bibnamefont
  {Maciel~Ansanelli}},\ }\emph {\bibinfo {title} {Bidirectional insights
  between classical and quantum causal inference}},\ \href
  {https://uwspace.uwaterloo.ca/items/51f053aa-594b-42e1-b60d-80ceb465d984}
  {Ph.D. thesis},\ \bibinfo  {school} {University of Waterloo} (\bibinfo {year}
  {2026})\BibitemShut {NoStop}%
\bibitem [{\citenamefont {Bonet}(2001)}]{bonet:instruments}%
  \BibitemOpen
  \bibfield  {author} {\bibinfo {author} {\bibfnamefont {B.}~\bibnamefont
  {Bonet}},\ }\bibfield  {title} {\enquote {\bibinfo {title} {Instrumentality
  tests revisited},}\ }in\ \href {https://arxiv.org/abs/1301.2258} {\emph
  {\bibinfo {booktitle} {Proc. 17th Conf. on Uncertainty in Artificial
  Intelligence}}},\ \bibinfo {editor} {edited by\ \bibinfo {editor}
  {\bibfnamefont {J.}~\bibnamefont {Breese}}\ and\ \bibinfo {editor}
  {\bibfnamefont {D.}~\bibnamefont {Koller}}}\ (\bibinfo  {publisher} {Morgan
  Kaufmann},\ \bibinfo {address} {Seattle, WA},\ \bibinfo {year} {2001})\ pp.\
  \bibinfo {pages} {48--55}\BibitemShut {NoStop}%
\bibitem [{\citenamefont {Pearl}\ and\ \citenamefont
  {Paz}(2022)}]{pearl2022graphoids}%
  \BibitemOpen
  \bibfield  {author} {\bibinfo {author} {\bibfnamefont {J.}~\bibnamefont
  {Pearl}}\ and\ \bibinfo {author} {\bibfnamefont {A.}~\bibnamefont {Paz}},\
  }\bibfield  {title} {\enquote {\bibinfo {title} {Graphoids: Graph-based logic
  for reasoning about relevance relations or when would x tell you more about y
  if you already know z?}}\ }in\ \href {\doibase 10.1145/3501714.3501729}
  {\emph {\bibinfo {booktitle} {Probabilistic and Causal Inference: The Works
  of Judea Pearl}}}\ (\bibinfo {year} {2022})\ pp.\ \bibinfo {pages}
  {189--200}\BibitemShut {NoStop}%
\bibitem [{\citenamefont {Pirandola}\ \emph {et~al.}(2020)\citenamefont
  {Pirandola}, \citenamefont {Andersen}, \citenamefont {Banchi}, \citenamefont
  {Berta}, \citenamefont {Bunandar}, \citenamefont {Colbeck}, \citenamefont
  {Englund}, \citenamefont {Gehring}, \citenamefont {Lupo}, \citenamefont
  {Ottaviani} \emph {et~al.}}]{PUL&}%
  \BibitemOpen
  \bibfield  {author} {\bibinfo {author} {\bibfnamefont {S.}~\bibnamefont
  {Pirandola}}, \bibinfo {author} {\bibfnamefont {U.~L.}\ \bibnamefont
  {Andersen}}, \bibinfo {author} {\bibfnamefont {L.}~\bibnamefont {Banchi}},
  \bibinfo {author} {\bibfnamefont {M.}~\bibnamefont {Berta}}, \bibinfo
  {author} {\bibfnamefont {D.}~\bibnamefont {Bunandar}}, \bibinfo {author}
  {\bibfnamefont {R.}~\bibnamefont {Colbeck}}, \bibinfo {author} {\bibfnamefont
  {D.}~\bibnamefont {Englund}}, \bibinfo {author} {\bibfnamefont
  {T.}~\bibnamefont {Gehring}}, \bibinfo {author} {\bibfnamefont
  {C.}~\bibnamefont {Lupo}}, \bibinfo {author} {\bibfnamefont {C.}~\bibnamefont
  {Ottaviani}},  \emph {et~al.},\ }\bibfield  {title} {\enquote {\bibinfo
  {title} {Advances in quantum cryptography},}\ }\href {\doibase
  10.1364/AOP.361502} {\bibfield  {journal} {\bibinfo  {journal} {Advances in
  Optics and Photonics}\ }\textbf {\bibinfo {volume} {12}},\ \bibinfo {pages}
  {1012} (\bibinfo {year} {2020})}\BibitemShut {NoStop}%
\end{thebibliography}
%

\end{document}